\documentclass[sigconf]{acmart}
\usepackage{tabularx}
\usepackage{booktabs}
\usepackage{algorithm}
\usepackage{multirow}
\usepackage{enumitem}
\usepackage{bm}
\usepackage{xspace}
\usepackage{marvosym}
\usepackage[normalem]{ulem}
\usepackage{makecell} 
\usepackage{graphicx} 
\usepackage{subcaption} 
\usepackage{pifont}
\usepackage{tikz} 
\usepackage{booktabs}

\usepackage{bbding}
\usepackage{algorithm}  
\usepackage{algpseudocode}
\usepackage{amsthm}
\newtheorem{assumption}{Assumption}
\newtheorem{proposition}{Proposition}
\AtBeginDocument{%
  }

\setcopyright{acmlicensed}
\copyrightyear{2018}
\acmYear{2018}
\acmDOI{XXXXXXX.XXXXXXX}
\acmConference[Conference acronym 'XX]{Make sure to enter the correct
  conference title from your rights confirmation email}{June 03--05,
  2018}{Woodstock, NY}
\acmISBN{978-1-4503-XXXX-X/2018/06}




\begin{document}


\title{Causality Enhancement for Cross-Domain Recommendation}
\author{Zhibo Wu, Yunfan Wu, Lin Jiang, Ping Yang, Yao Hu}
\email{{wuzhibo, wuyunfan, jianglin, jiadi, xiahou}@xiaohongshu.com}
\affiliation{%
  \institution{Xiaohongshu  Co., Ltd}
  \country{Beijing, China}
}

\renewcommand{\shortauthors}{Trovato et al.}
\newcommand{\ds}{CLM}
\newcommand{\ms}{DCMM}
\newcommand{\es}{CEM}
\newcommand{\fr}{CE-CDR}
\newcommand{\data}{Causality Labeling Module}
\newcommand{\model}{Direct Causality Modeling Module}
\newcommand{\enhance}{Causality Enhancement Module}
\setlength{\textfloatsep}{3.3pt plus 1pt minus 1pt}
\setlength{\intextsep}{3.3pt plus 1pt minus 1pt}
\setlength{\floatsep}{3.3pt plus 1pt minus 1pt}
\setlength{\abovecaptionskip}{3.3pt plus 1pt minus 1pt}
\setlength{\belowcaptionskip}{3.3pt plus 1pt minus 1pt}
\setlength{\abovedisplayskip}{3.3pt plus 1pt minus 1pt} 
\setlength{\belowdisplayskip}{3.3pt plus 1pt minus 1pt} 

\begin{abstract}
Cross-domain recommendation forms a crucial component in recommendation systems. It leverages auxiliary information through source domain tasks or features to enhance target domain recommendations.
However, incorporating inconsistent source domain tasks may result in insufficient cross-domain modeling or negative transfer. 
While incorporating source domain features without considering the underlying causal relationships may limit their contribution to final predictions. 
Thus, a natural idea is to directly train a cross-domain representation on a causality-labeled dataset from the source to target domain. Yet this direction has been rarely explored, as identifying unbiased real causal labels is highly challenging in real-world scenarios.
In this work, we attempt to take a first step in this direction by proposing a causality-enhanced framework, named CE-CDR. Specifically, we first reformulate the cross-domain recommendation as a causal graph for principled guidance. We then construct a causality-aware dataset heuristically. Subsequently, we derive a theoretically unbiased Partial Label Causal Loss to generalize beyond the biased causality-aware dataset to unseen cross-domain patterns, yielding an enriched cross-domain representation, which is then fed into the target model to enhance target-domain recommendations.
Theoretical and empirical analyses, as well as extensive experiments, demonstrate the rationality and effectiveness of CE-CDR and its general applicability as a model-agnostic plugin.
Moreover, it has been deployed in production since April 2025, showing its practical value in real-world applications.
\end{abstract}




\keywords{Recommender System, Cross-domain Recommendation}
\received{20 February 2007}
\received[revised]{12 March 2009}
\received[accepted]{5 June 2009}

\maketitle
\begin{figure}[t]
    \centering
    \subfloat[\small Illustration of a typical cross-domain causal relationship.]{
    \includegraphics[width=0.42\textwidth]{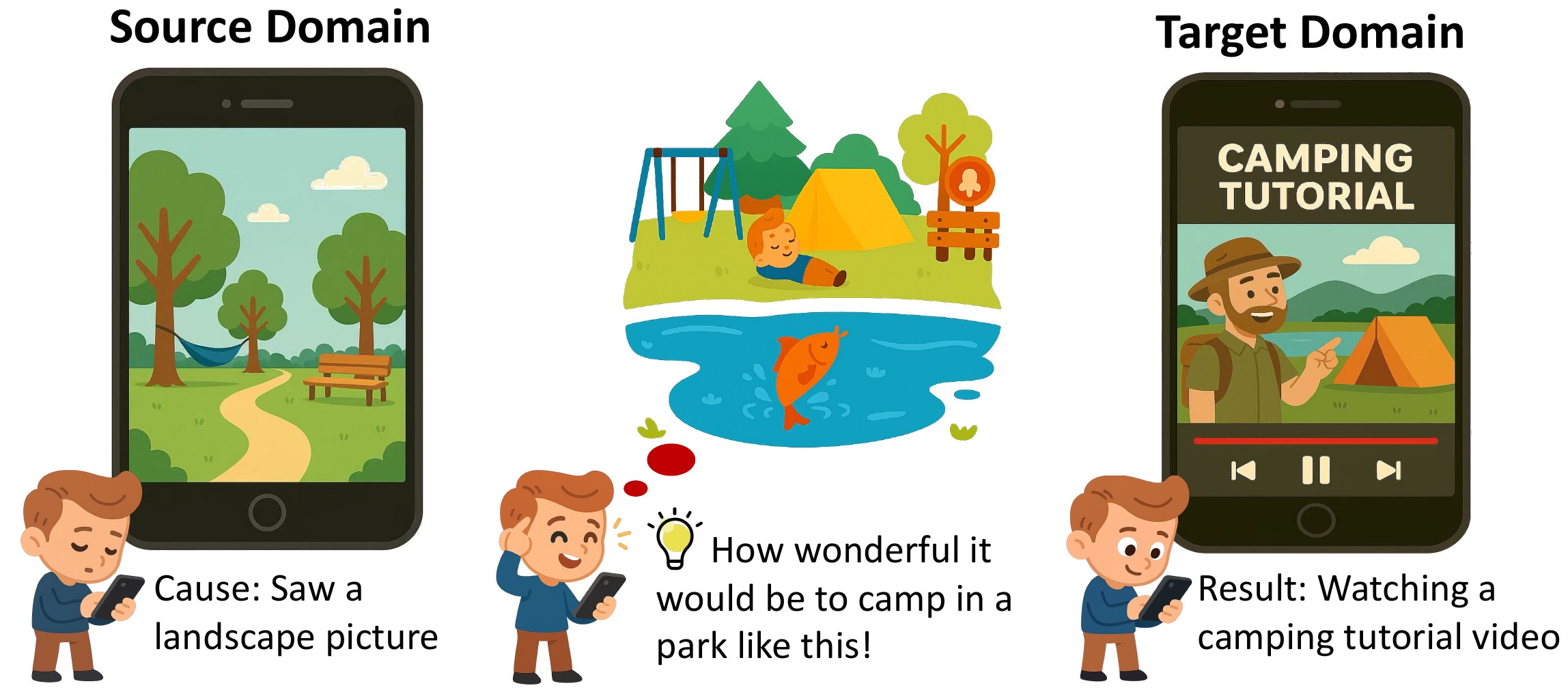}
        \label{fig:multi-task}
    }
    \par
    \subfloat[\small Causal graph of cross-domain recommendation.]{
    \includegraphics[width=0.42\textwidth]{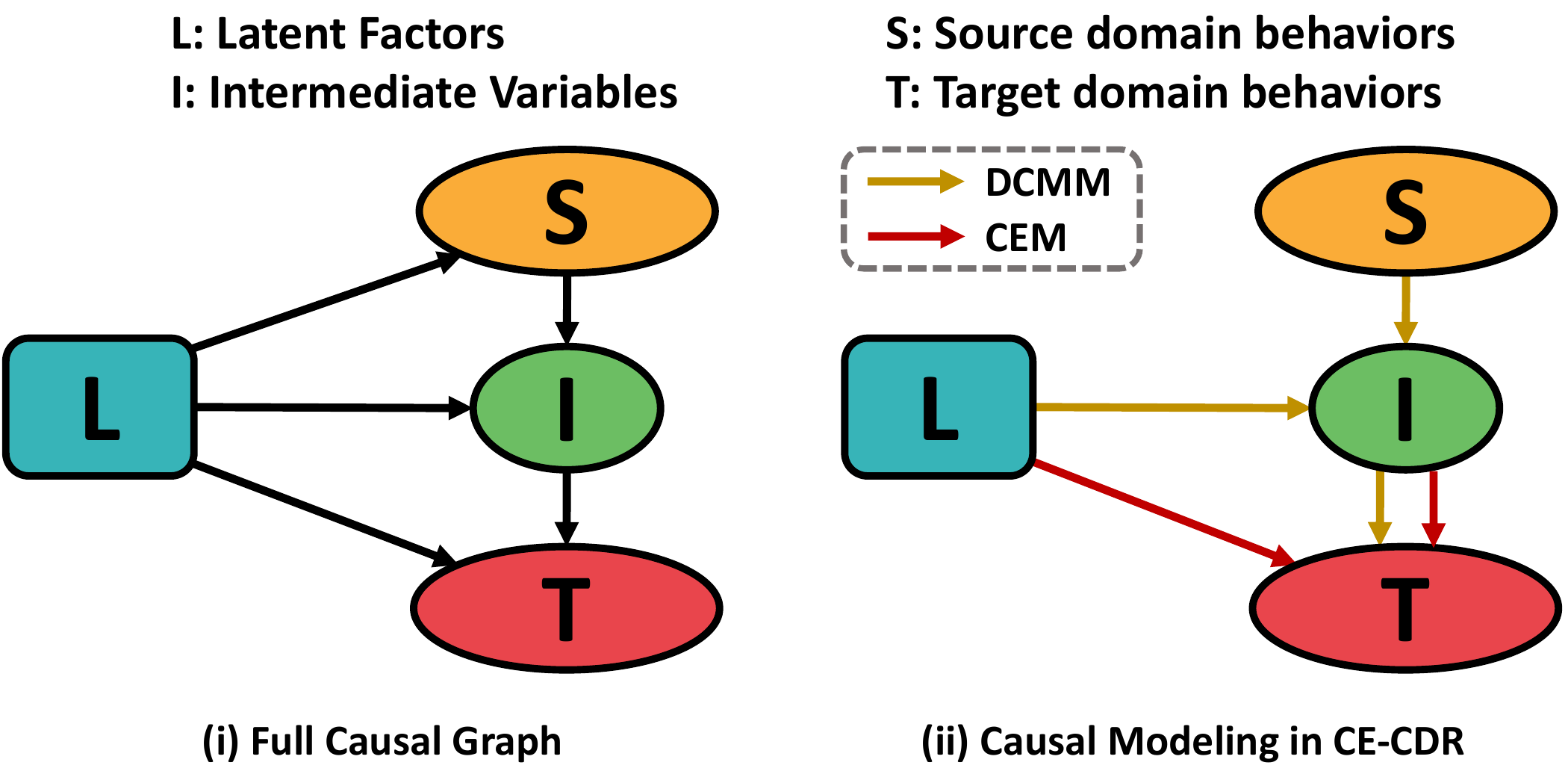}
        \label{fig:submerge}
    }
    \caption{\small Cross-domain causal illustration and causal graph.}
\end{figure}
\section{Introduction}\label{sec: introduction}

Recommender systems have been widely applied~\cite{deldjoo2023review, wu2023personalized, gao2022kuairec} to provide personalized recommendations. 
However, classical recommender systems that rely on interactions within a specific platform or domain often suffer from the data sparsity problem~\cite{idrissi2020systematic}, which hinders the comprehensive user preference modeling. 

To alleviate this problem, Cross-Domain Recommendation (CDR) has been proposed, leveraging user data from the source domain to assist recommendation in the target domain~\cite{zhu2021cross, khan2017cross, zang2022survey}. 
A typical example of cross-domain pattern is depicted in Figure~\ref{fig:multi-task}, where a user who saw a landscape picture in the note section (source domain), got inspired to camping, and then watched a camping tutorial video in the video section (target domain). 
Hence, by utilizing behavioral data from the notes domain, recommendation quality in the video domain can be improved. 

To better characterize the underlying mechanism of CDR, we reformulate the task as a causal graph in Figure~\ref{fig:submerge}. 
Latent factors $\bm{L}$ denote user preferences, system-level factors, etc, and intermediate variables $\bm{I}$ denote valuable cross-domain transferable information.
The goal of single-domain recommendation is learning a function: $\bm{L} \rightarrow \bm{T}$, while CDR aims to improve the prediction of $\bm{T}$ by leveraging $\bm{S}$ from the source domain. Following this, we then revisit existing CDR methods from a causal perspective.

CDR methods leverage source domain data in two primary ways.
Multi-task learning approaches simultaneously optimize recommendation tasks in both domains, i.e., modeling two functions: $\bm{L}\rightarrow \bm{S}$ and $\bm{L}\rightarrow \bm{T}$.
They facilitate cross-domain knowledge transfer through the use of shared model parameters~\cite{singh2008relational, lin2024mixed, chen2024improving} or user representations~\cite{hu2018conet, li2020ddtcdr, li2021dual, zhao2024discerning}.
However, due to inconsistent domain tasks~\cite{yu2020gradient}, shared parameters or representations may fail to capture valuable cross-domain information.
Especially when domains exhibit different user behavior patterns, the source domain task may even induce negative transfer~\cite{liu2019loss, li2023one}.

Another line of research incorporates users' source domain behaviors as additional input features to make recommendations in target domain~\cite{huang2024exit, loni2014cross}, i.e., modeling a function: $(\bm{L}, \bm{S})\rightarrow \bm{T}$.
Typically, attention mechanisms~\cite{ouyang2020minet} and graph neural networks~\cite{zhao2023cross, song2024mitigating} are employed to integrate user behavior across domains.
However, they still struggle to make full use of variables $S$, due to the sparsity of causal relationships between source and target domains~\cite{dai2021poso,zhang2023cold}.
Specifically, user preferences in target domain may not necessarily be influenced by source-domain behaviors.
Thus, indiscriminately training recommendation models to predict all target domain items from source domain behaviors could hinder the effectiveness of causal learning from source domain to target domain.

To overcome the challenge of task inconsistency and causality sparsity, a natural idea is to directly train a cross-domain representation $\bm{I}$ on a causal dataset from source to target domain. Yet this direction has been rarely explored, as obtaining unbiased real causal labels is highly challenging in real-world scenarios. In this work, we attempt to take a first step in this direction. We propose a causality-enhanced
framework, named \textbf{\fr}, that directly models causal relationships from both data and model perspectives. 
From the data side, \data ~(\ds) heuristically constructs similarity-driven causal supervision signals under psychological assumptions for downstream model to extract $\bm{I}$. 
From the model side, \model ~(\ms) trains a causal model  $(\bm{L}, \bm{S}) \rightarrow \bm{I} \rightarrow \bm{T}$ to generalize unseen cross-domain patterns beyond labeling mechanisms under a theoretically unbiased Partial Label Causal Loss (PLCL). It learns a cross-domain representation, which is then fed into the \enhance ~(\es) to enhance the target-domain recommendation.


The main contributions of this work are: 
\begin{itemize}[leftmargin=*, topsep=1pt, partopsep=0pt, itemsep=0pt, parsep=0pt]
    \item We reformulate the task of cross-domain recommendation as a causal graph for principled guidance and revisit existing CDR methods from a causal perspective.
    \item To the best of our knowledge, we are the first to propose a framework that directly models cross-domain causality from both data and model perspectives. Specifically, we heuristically construct a causality-aware dataset under psychological assumptions and derive a theoretically unbiased loss for generalized cross-domain pattern learning.
    \item Both theoretical and empirical analyses, together with experimental results, validate the rationality and effectiveness of the causality-aware dataset construction. It also demonstrates the ability of PLCL to generalize to unseen causal patterns.
    \item We conduct comprehensive experiments and online A/B tests, demonstrating its effectiveness and its general applicability as a model-agnostic plugin. Furthermore, it has been deployed in production, showing its practical value.
\end{itemize}

\section{Related Works}

\subsection{Cross-Domain Recommendation}
Cross-domain recommendation (CDR)~\cite{cantador2015cross} has been introduced to alleviate the data sparsity problem. 
Existing approaches can be classified into two main categories. 
Multi-task learning methods simultaneously optimize recommendation tasks across domains. 
They facilitate cross-domain knowledge transfer through shared model parameters or representations. 
Shared model parameters include user embeddings~\cite{singh2008relational}, item embeddings~\cite{gao2019cross}, group embeddings~\cite{lin2024mixed}, aspect embeddings~\cite{zhao2020catn}, or user preference generators~\cite{chen2024improving}. 
In contrast, shared representations establish mapping functions~\cite{hu2018conet, li2020ddtcdr, zhu2019dtcdr, fu2019deeply} or distance constraints~\cite{zhao2024discerning, li2021dual} between representations across domains to learn shared preferences. 
However, due to the discrepancies in domain-specific recommendation tasks~\cite{liu2019loss, li2023one}, shared parameters or representations may fail to capture valuable cross-domain information. 
Alternative CDR methods treat source domain user behaviors as additional input features~\cite{huang2024exit, loni2014cross}. 
Attention~\cite{ouyang2020minet, hu2018mtnet} or graph networks~\cite{zhao2023cross, song2024mitigating, zhao2019cross, liu2020cross} are typically employed to aggregate information from source domain interactions.
However, these methods overlook the causality sparsity during training, hindering their
contribution to final predictions~\cite{dai2021poso,zhang2023cold}. Specifically, user preferences in the target domain are not necessarily influenced by their source-domain behaviors, as users may exhibit distinct interests across domains.

\subsection{Causal Recommendation}
While recommendation systems primarily rely on co-occurrence patterns, such as feature-behavior associations, real-world decision-making is fundamentally governed by causality rather than co-occurrence. 
As a result, causal inference has been introduced into recommendation~\cite{luo2024survey}, addressing three key challenges. 
The first is data bias, including exposure bias~\cite{zhang2021causal, wei2021model}, popularity bias~\cite{schnabel2016recommendations, wang2019doubly, saito2020unbiased, chen2021autodebias, wang2022unbiased}, and conformity bias~\cite{zheng2021disentangling}. 
Another challenge pertains to missing data~\cite{wang2021counterfactual, yang2021top, wang2022causal, he2022causpref}, where the available data fails to capture users' comprehensive user preferences, even without bias. 
In addition, causal inference has also been employed to improve explainability~\cite{si2022model, tan2021counterfactual, xian2019reinforcement}, diversity~\cite{wang2022user, xu2022dynamic}, and fairness~\cite{shao2024average, huang2022achieving}. 
Among them, counterfactual inference~\cite{pearl2009causality} and inverse propensity weighting (IPW)~\cite{hirano2003efficient} are most widely adopted for causal inference.
As for relatively underexplored causality-based cross-domain recommendation, existing methods mainly focus on enhancing disentangled user representation learning~\cite{menglin2024c2dr, li2024causalcdr, du2024identifiability} and out-of-distribution generalization~\cite{zhang2024transferring, li2024cross}.
The direct modeling of cross-domain causality has been rarely explored. In this work, we attempt to take a first step in this direction. 
The causality is considered throughout both the data construction and model learning phases.

\begin{figure*}[t!]
    \centering
    \includegraphics[width=0.95\linewidth]{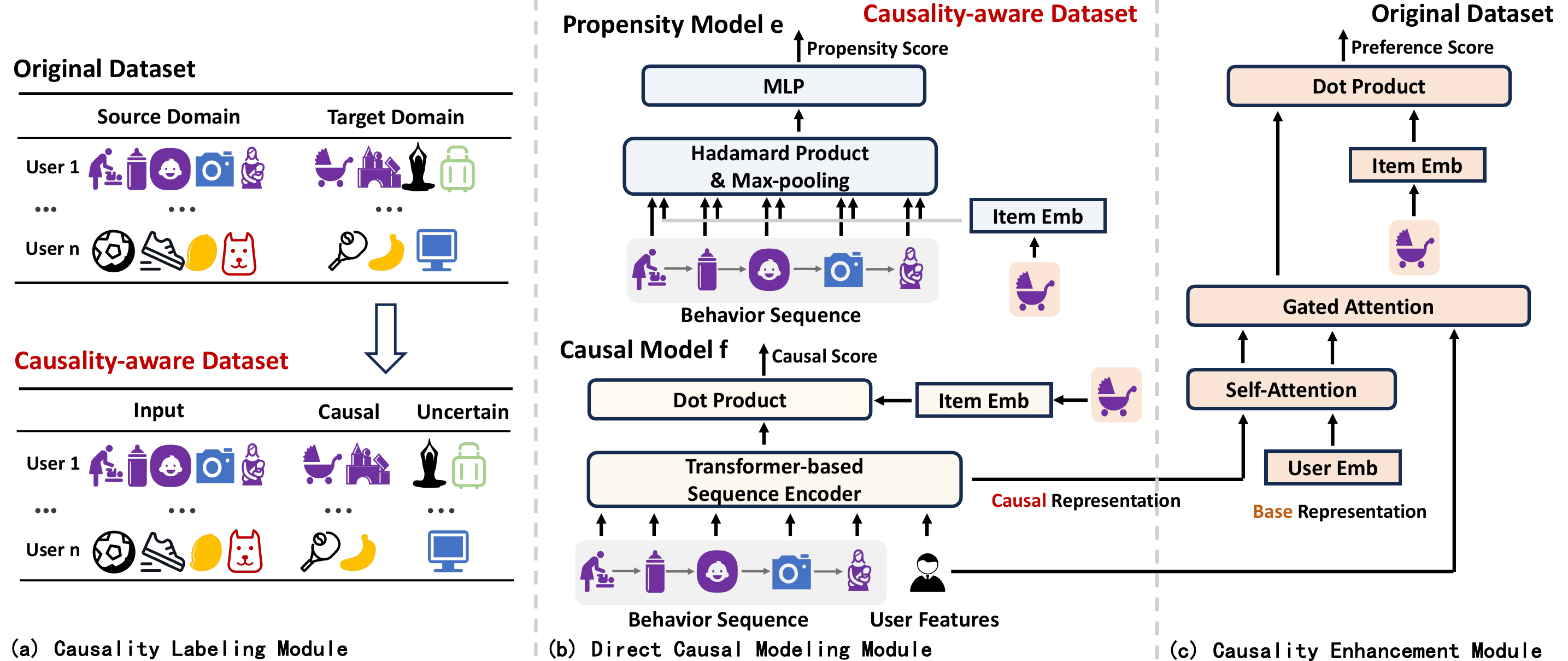}
    \caption{\small Overall architecture of \fr. (a) \ds: constructing causality-aware source-target behavior pairs; (b) \ms: learning how source domain behaviors influence target domain preferences; (c) \es: enhancing the target model for comprehensive user preference modeling.}
    \label{fig: overall}
\end{figure*}

\section{Methodology}
\subsection{Overview}
\subsubsection{Problem Formulation}
We consider a cross-domain recommendation scenario involving a source domain $s$ and a target domain $t$, both of which share a unified user set $\mathcal{U}$.  
Each domain has its own item set: $\mathcal{I}_s$ for the source domain and $\mathcal{I}_t$ for the target domain.  
For clarity, we denote items from each domain as $i_s \in \mathcal{I}_s$ and $i_t \in \mathcal{I}_t$, respectively.  
Each user $u \in \mathcal{U}$ has interaction sequences in either or both domains, represented as $\mathcal{S}_s^u = (i_s^{u,1}, i_s^{u,2}, \dots)$ in the source domain and $\mathcal{S}_t^u = (i_t^{u,1}, i_t^{u,2}, \dots)$ in the target domain.  
The goal of cross-domain recommendation is to leverage knowledge from the source domain $s$ to improve recommendation accuracy in the target domain $t$, i.e., recommending items $i_t \in \mathcal{I}_t$ that best match users’ preferences and maximize their engagement.

\subsubsection{Overall Architecture}
The overall architecture of \fr{} is depicted in Figure \ref{fig: overall}, which demonstrates the pipeline comprising \data{}~(\ds{}), \model~(\ms{}), and \enhance~(\es{}). 
\ds{} is responsible for constructing high-quality, causality-aware behavior pairs. 
\ms{} aims to generalize cross-domain causal patterns from a limited set of causal samples guided by the Partial Label Causal Loss. 
Finally, \es{} incorporates the learned causal representations into the target domain model in an adaptive manner.

\subsection{\data}\label{sec: Causality Labeling Module}
\begin{figure}[t]
    \centering
    \includegraphics[width=0.45\textwidth]{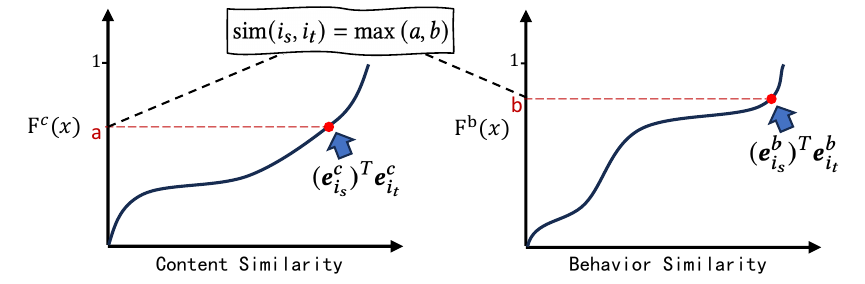}   
    \caption{Illustration of similarity calibration and fusion.} 
    \label{fig:calibration}
\end{figure}
In this section, we describe how \ds{} finds cross-domain samples with high causality $\bm{S}\rightarrow \bm{T}$ and constructs a causality-aware dataset under the psychological assumptions for downstream causal model.


\begin{assumption}\label{assumption: Similarity-Causality Alignment}
(Similarity-Causality Alignment).
For a given user, the preference for an item $i_s$ in the source domain causally influences its preference for a similar item $i_t$ in the target domain.
\end{assumption}
This assumption is grounded in psychological theories, such as \emph{Cognitive Consistency Theory}~\cite{kruglanski2018cognitive}, which posits that individuals tend to favor information that aligns with their pre-existing beliefs and behaviors across various domains. 
In a real-world content-sharing platform, users' interaction with a specific note causes them to be directed toward similar videos, thus maintaining cognitive harmony. 
This phenomenon is further supported by psychological principles, such as selective exposure~\cite{fischer2005selective} and confirmation bias~\cite{kassin2013forensic}.




We use two similarity measures to assess the similarity between items across two domains in practice. 
The content embedding is learned by a classification task that assigns items to shared categories, entities, etc., across two domains (e.g., music, sports, and beauty) to capture \textbf{content similarity}. 
In addition, we account for \textbf{behavioral similarity} by utilizing a graph-based encoder~\cite{he2020lightgcn} on the global user-item interaction graph to generate behavior-driven embeddings. 
The similarity between an item $i_s$ and an item $i_t$ in each space is computed as follows:
\begin{equation}
\text{sim}^{\{c,b\}}(i_s,i_t) = (\bm{e}_{i_s}^{\{c,b\}})^{\top}\bm{e}_{i_t}^{\{c,b\}},
\end{equation}
where $\bm{e}_{i}^{c}$ and $\bm{e}_{i}^{b}$ represent the content and behavior embeddings.

To ensure a balanced fusion of content and behavioral similarity, we apply a \textit{cumulative distribution function (CDF)} transformation to calibrate their scales into a comparable range. and a maximum operator to select top-ranked items within each similarity type.
The overall similarity score between $i_s$ and $i_t$ is computed as:
\begin{equation}
\text{sim}(i_s, i_t) = \max\left( \bm{F}^c(\text{sim}^c(i_s,i_t)),\ \bm{F}^b(\text{sim}^b(i_s,i_t)) \right).
\end{equation}
where $\bm{F}^{c}(\cdot)$ and $\bm{F}^{b}(\cdot)$ denote the empirical CDFs estimated from content and behavior similarities, respectively.
The calibration process is illustrated in Figure~\ref{fig:calibration}. 

Furthermore, we define the similarity between a behavior sequence $\mathcal{S}_s = (i_s^1, i_s^2, \dots, i_s^l)$ in the source domain and a target item $i_t$ in the target domain as:  
\begin{equation}
\text{sim}(\mathcal{S}_s, i_t) = \max \{\text{sim}(i_s^j, i_t)|j=1,\ldots, l \}.
\end{equation}

Leveraging this similarity measure, \ds ~ can extract causal behavior pairs from observed cross-domain behaviors.  
Specifically, for each user, we consider target item $i_t$ that occurs after the source behavior sequence $S_s^u$, and define the causal behavior pairs as:

\begin{equation}
\mathcal{D}^{+}_{u} = \{(u, \mathcal{S}_s^u, i_t) \mid \text{sim}(\mathcal{S}_s, i_t) > \tau, i_t \in \mathcal{S}_t^u\},
\end{equation}
where $\tau$ is a hyperparameter used to filter out noisy data.

By aggregating the behavior pairs from all users, we obtain the complete set of causal behavior pairs:
\begin{equation}
\mathcal{D}^{+} = \bigcup\limits_{u \in \mathcal{U}} \mathcal{D}^{+}_{u}.
\end{equation}
Thus, \ds~ constructs a causality-aware dataset $\mathcal{D}$ by assigning positive labels to behavior pairs in \(\mathcal{D}^{+}\), and negative labels otherwise. This dataset is then used for downstream causal modeling.

\subsection{\model}
In this section, we first present the causal model $f$ trained on causality-aware dataset $\mathcal{D}$, capturing how source-domain behaviors influence target-domain preference, i.e., $(\bm{L}, \bm{S}) \rightarrow \bm{I} \rightarrow \bm{T}$. Then we derive the theoretical foundations of optimizing causal model $f$ and propensity model $e$ under a Partial Label Causal Loss. The optimization promotes $f$ to recognize unseen causal patterns beyond $\mathcal{D}$,  yielding a theoretically unbiased causal modeling.

\subsubsection{Backbone model}
We employ the classic attention-based model SASRec~\cite{kang2018self} as our backbone model to learn the causality.

The concatenation of the content and ID embeddings, used in the original implementation of SASRec, forms the initial item embeddings.
Moreover, a user feature encoder is employed to capture the user-specific causal preferences.


The causal model $f$ predicts the probability of interacting with a target domain item, caused by user's source domain behaviors:
\begin{equation}
f(u, \mathcal{S}_s^u,i_t) = \sigma ([f_\text{se}(f_\text{fe}(u),f_\text{em}(i_s^{u,1}),f_\text{em}(i_s^{u,2}),\dots)]^\top f_\text{em}(i_t)),
\end{equation}
where $f_\text{em}(\cdot) \in \mathbb{R}^d$ is the item embedding function combining the content and ID embeddings.
$f_\text{fe}(\cdot) \in \mathbb{R}^d$ generates user embeddings from their source domain features.
$f_\text{se}(\cdot) \in \mathbb{R}^d$ represents the sequence encoder from SASRec, consisting of multiple self-attention~\cite{vaswani2017attention} and feed-forward layers, enhanced by residual connections~\cite{he2016deep}, layer normalization~\cite{ba2016layer}, and dropout regularization~\cite{srivastava2014dropout}.
$\sigma(\cdot)$ denotes the sigmoid function.

Binary Cross Entropy (BCE) loss is employed to train the causal model $f$ on the set of real causal samples $\mathcal{C}$:
\begin{equation}
\label{eq:causal_loss_1}
\small
  \mathcal{L} = - \left[
        \sum_{(u, S_s^u,i_t) \in \mathcal{C}}  \log \left(f(u, S_s^u,i_t) \right) +
        \sum_{(u, S_s^u,i_t) \notin \mathcal{C}} \log \left( 1 - f(u, S_s^u,i_t) \right) \right],
\end{equation}

\subsubsection{Partial Label Causal Loss}
While real causal samples $\mathcal{C}$ are unavailable, we propose the following assumption.
\begin{assumption}
(Asymmetry of Similarity and Causality).  High similarity is not always a necessary condition for causal relationships. 
Certain causal item pairs may exhibit low similarity, due to latent factors or limitations of learned item embeddings, i.e., there exists pair $(u, S_s^u,i_t) \in \mathcal{C}$ such that $(u, S_s^u,i_t) \notin \mathcal{D}^{+}$.
\end{assumption}
This assumption aligns with the common intuition in real-world recommender systems. 
Consequently, directly training on causality-aware dataset $\mathcal{D}$ labeled from \ds, yields two issues:
First, some causal behavior pairs are mistakenly labeled as negative samples.
Second, exclusive selection of highly similar samples biases causal model towards capturing labeling strategy rather than causality.

To address these issues, we propose a novel Partial Label Causal Loss, which corrects the training on the partially labeled dataset $\mathcal{D}$ by employing a causal model $f(x)$ and a propensity model $e(x)$.

The propensity score~\cite{imbens2015causal} is adopted to quantify the probability of a true causal instance $x = (u, S_s^u, i_t)$ being selected into the set of causal behavior pairs $\mathcal{D}^+$, i.e.: 
\begin{equation}
e(x) = p(x \in \mathcal{D}^{+} \mid x, x \in \mathcal{C}).
\end{equation}

The following two propositions establish the theoretical foundations of how to optimize the causal model $f(x)$ and the propensity model $e(x)$ on constructed causality-aware dataset $\mathcal{D}$.

\begin{proposition}
(Partial Label Causal Loss). 
Given $s \perp\!\!\!\perp y \mid x$ and true propensity score $e(x)$, the causal model $f(x)$ is optimized by the following loss to learn a theoretically unbiased estimation of the true causal probability $p(y=1\mid x)$, with a partially labeled dataset $\mathcal{D}$:
\begin{equation}
\begin{aligned}
\mathcal{L} & = \frac{1}{n}\sum_{x} \left[h(x)\delta_1^f(x) +(1-h(x))\delta_0^f(x) \right]
\end{aligned}
\end{equation}
where $h(x) = \text{sg}\left[s+(1-s)\frac{f(x)(1-e(x))}{1-f(x)e(x)}\right]$, $\delta_1^f(x)=-\log(f(x))$ and $\delta_0^f(x)=-\log(1-f(x))$.
$s$ and $y$ are the binary indicators $s=\mathbb{I}(x\in\mathcal{D}^{+})$ and $y=\mathbb{I}(x\in\mathcal{C})$.
$n$ is the number of all samples.
\end{proposition}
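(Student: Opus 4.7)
The plan is to establish Fisher consistency of the stated surrogate by showing that, when the weight $h(x)$ is evaluated at the true conditional $p(x):=P(y=1\mid x)$, the per-sample expected loss coincides with the oracle BCE loss one would compute with access to the latent label $y$. Because $h(x)$ is wrapped in $\text{sg}[\cdot]$, unbiasedness reduces to showing that $f\equiv p$ is a self-consistent stationary point of the surrogate.

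First I would rewrite the oracle BCE via the tower law, conditioning on the observable $(x,s)$ rather than on $(x,y)$:
\begin{equation*}
\mathbb{E}_{y\mid x,s}\bigl[y\,\delta_1^f(x)+(1-y)\,\delta_0^f(x)\bigr] = P(y=1\mid x,s)\,\delta_1^f(x)+P(y=0\mid x,s)\,\delta_0^f(x).
\end{equation*}
Next I would compute $P(y=1\mid x,s)$ under the labeling mechanism. The inclusion $\mathcal{D}^+\subseteq\mathcal{C}$ gives $P(y=1\mid x,s=1)=1$; while $P(s=1\mid y=0,x)=0$, combined with the propensity $e(x)=P(s=1\mid y=1,x)$ and the total-probability expansion $P(s=0\mid x)=1-p(x)e(x)$, yields by Bayes' rule
\begin{equation*}
P(y=1\mid x,s=0) = \frac{(1-e(x))\,p(x)}{1-e(x)\,p(x)}.
\end{equation*}
The independence premise $s\perp\!\!\!\perp y\mid x$ is used to rule out any residual channel by which the heuristic selector could depend on the latent outcome beyond $e(x)$. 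Combining the two cases gives
\begin{equation*}
P(y=1\mid x,s) = s + (1-s)\,\frac{(1-e(x))\,p(x)}{1-e(x)\,p(x)},
\end{equation*}
which is exactly the expression for $h(x)$ with $p$ in place of $f$.

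Finally I would invoke the stop-gradient. Treating $h$ as a constant under differentiation, the population first-order condition of the BCE-type surrogate reduces to $f(x) = \mathbb{E}[h(x)\mid x]$, and a direct substitution shows $\mathbb{E}[h\mid x]\big|_{f=p} = p(x)e(x)+(1-p(x)e(x))\cdot\frac{p(x)(1-e(x))}{1-p(x)e(x)} = p(x)$. Hence $f\equiv p$ is a self-consistent fixed point of the optimization, averaging over the empirical distribution and taking $n\to\infty$ then gives the claimed unbiasedness.

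The main obstacle I anticipate is conceptual rather than computational, and is twofold. (i) The premise $s\perp\!\!\!\perp y\mid x$ must be reconciled with the structural implication $s=1\Rightarrow y=1$; I would interpret it as the statement that, conditional on $x$, the heuristic labeler uses no information about $y$ beyond what is encoded in the propensity $e(x)$, and I would make this reading explicit at the outset so that the Bayes computation above is unambiguous. (ii) ``Unbiased'' here is Fisher consistency of a self-referential loss under stop-gradient, not a pointwise equality with the oracle BCE for arbitrary $f$; I would explicitly distinguish these two notions and rely on the fixed-point argument to discharge the weaker but sufficient claim.
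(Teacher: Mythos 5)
Your proposal is correct, and its computational core --- the Bayes inversion giving $P(y=1\mid x,s=0)=\frac{p(x)(1-e(x))}{1-p(x)e(x)}$ and the combination $P(y=1\mid x,s)=s+(1-s)P(y=1\mid x,s=0)$ --- is exactly the derivation the paper uses. Where you genuinely go beyond the paper is in handling the self-referential character of the loss. The paper simply rewrites the ideal risk as $-\mathbb{E}_{x,s}\mathbb{E}_{y\mid x,s}\log p(y\mid x,f)$, substitutes the current estimates $f$ and $e$ into $P(y=1\mid x,s)$ to define $h$, and stops; it never addresses the fact that $h$ depends on $f$ itself, so ``unbiased'' can only be meaningful in the Fisher-consistency sense you isolate. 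Your fixed-point verification $\mathbb{E}[h\mid x]\big|_{f=p}=p(x)e(x)+(1-p(x)e(x))\cdot\frac{p(x)(1-e(x))}{1-p(x)e(x)}=p(x)$, combined with the observation that under stop-gradient the population stationarity condition is $f(x)=\mathbb{E}[h(x)\mid x]$, is precisely the step the paper leaves implicit, and it is what makes the proposition's claim precise; the paper's version buys brevity at the cost of conflating ``the loss equals the oracle BCE when the plug-ins are exact'' with ``optimizing the loss recovers $p$.'' You are also right to flag the premise $s \perp\!\!\!\perp y \mid x$: read literally it would give $P(y=1\mid x,s)=P(y=1\mid x)$ and render the entire correction vacuous, and it is inconsistent with $\mathcal{D}^{+}\subseteq\mathcal{C}$. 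The intended assumption is the standard selected-completely-at-random condition $P(s=1\mid x,y=1)=e(x)$ together with $P(s=1\mid x,y=0)=0$, which is the reading you adopt and the one the paper's own Bayes computation tacitly relies on (its denominator $1-p(y=1\mid x)p(s=1\mid x,y=1)$ is only valid under that zero-selection-of-negatives condition). In short: same decomposition and same key identity, but your write-up supplies the consistency argument and the clarified assumptions that the published proof omits.
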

\begin{proof}
The ideal unbiased loss for the  causal model $f$ is:
\begin{equation}
\begin{aligned}
\mathcal{L} & = -\mathbb{E}_{x, y} \log p(y\mid x,f) \\ 
& = -\mathbb{E}_{x} \mathbb{E}_{y\mid x}  \log p(y\mid x,f) \\
& = -\frac{1}{n}\sum_{x} \mathbb{E}_{y\mid x}  \log p(y\mid x,f) \\
&= \frac{1}{n}\sum_{x} \left[ p(y=1|x)\delta_1^f(x) + (1-p(y=1|x))\delta_0^f(x) \right],
\end{aligned}
\end{equation}

Given only observable labels $s$ from \ds, but without access to the true label distribution $p(y=1 \mid x)$, the probability of a sample $x \notin \mathcal{D}^{+}$ being causal is:
\begin{equation}
\begin{aligned}
p(y=1 \mid x,s=0)&=\frac{p(y=1\mid x)p(s=0\mid x,y=1)}{p(s=0\mid x)}\\
    &=\frac{p(y=1\mid x)(1-p(s=1\mid x,y=1))}{1-p(s=1\mid x)}\\
    &=\frac{p(y=1\mid x)(1-p(s=1\mid x,y=1))}{1-p(y=1\mid x)p(s=1\mid x,y=1)}.
\end{aligned}
\end{equation}

Thus, the causal probability for any sample $x$ is:
\begin{equation}
\begin{aligned}
    p(y=1\mid x,s)&=sp(y=1\mid x,s=1)\ +\\
    &\quad\ (1-s)p(y=1\mid x,s=0)\\
    &= s+(1-s)p(y=1\mid x,s=0).
\end{aligned}
\end{equation}

Leveraging the current estimation $f(x)$ for $p(y=1\mid x)$ and $e(x)$ for $p(s=1\mid x,y=1)$, we derive a corrected causal label $h(x)$ as:
\begin{equation}
h(x)=p(y=1\mid x,s)=\text{sg}\left[s+(1-s)\frac{f(x)(1-e(x))}{1-f(x)e(x)}\right],
\end{equation}
where $\text{sg}[\cdot]$ denotes the stop-gradient operation, preventing the parameter updates on $f(x)$ and $e(x)$ when computing $h(x)$, since it acts solely as a corrected label.

Accordingly, the loss function is reformulated as:
\begin{equation}
\qedhere
\begin{aligned}
\mathcal{L} & = -\mathbb{E}_{x, y, s} \log p(y\mid x,f) \\ 
& = -\mathbb{E}_{x,s} \mathbb{E}_{y\mid x,s}  \log p(y\mid x,f) \\
& = -\frac{1}{n}\sum_{x} \mathbb{E}_{y\mid x,s}  \log p(y\mid x,f) \\
& = \frac{1}{n}\sum_{x} \left[ h(x)\delta_1^f(x) +(1-h(x))\delta_0^f(x) \right]
\end{aligned}
\end{equation}
\end{proof}

In practice, the corrected loss is exclusively applied to items interacted with by users in the target domain, as these interactions reflect real user preferences and are more likely to exhibit causal associations with their source domain behaviors.
Thus, $h(u, S_s^u, i_t)$ is set to $0$ for items where $i_t \notin \mathcal{S}_t^u$.
The final loss function is: 
\begin{equation}
\label{eq:causal_loss_f}
\small
\mathcal{L}_f=\sum_{u\in\mathcal{U}} \left\{\sum_{i_t\in\mathcal{S}_t^u}[h(i_t)\delta_1^f(i_t)+(1-h(i_t))\delta_0^f(i_t)]+\sum_{i_t\notin\mathcal{S}_t^u}\delta_0^f(i_t)\right\}.
\end{equation}
Here, $\mathcal{S}_s^u$ and $u$ as inputs to $h(\cdot)$ and $\delta(\cdot)$ are omitted for clarity.

In parallel to the causal model $f$, a propensity model is employed to learn the label assignment probability $e(x)$.
Consistent with the labeling strategy in \ds, we perform max-pooling over item-wise multiplication sequence followed by some feed-forward layers:
\begin{equation}
e(u,\mathcal{S}_s,i_t)=\sigma\left(e_\text{mlp}\left(\text{max-pooling}(e_\text{em}(i_s^1) \odot  e_\text{em}(i_t) ,\dots)\right)\right),
\end{equation}
where $e_\text{em}$ represents the item embedding function similar to $f_\text{em}$, $e_\text{mlp}$ denotes a multi-layer perceptron, and $\odot$ indicates the element-wise multiplication. 

In fact, if the propensity model can have arbitrary capacity, it's impossible to know if an unlabeled instance is due to low propensity or low causality.
Therefore, we design the propensity model to learn only the similarity-driven labeling mechanism, allowing the causal model to focus on the underlying causality.

\begin{proposition}
(Propensity Score Modeling). 
Given the estimated causal probability $h(x)$, the propensity model $e(x)$ is optimized by the following loss to approximate the label assignment probability $p(s=1 \mid x, y=1)$, enabling the correction of selection bias induced by the data labeling process in \ds:
\[
\mathcal{L} = \frac{1}{n}\sum_{x} h(x) [s\delta_1^e(x)+(1-s)\delta_0^e(x)].
\]
\end{proposition}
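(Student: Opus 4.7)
The plan is to mirror the derivation of Proposition 1. The quantity $e(x)$ is meant to approximate $p(s=1\mid x, y=1)$, so the natural cross-entropy objective on the $\{y=1\}$ subpopulation is $\mathcal{L}_{\text{ideal}} = -\frac{1}{n}\sum_{x}\mathbb{I}(y=1)\bigl[s\log e(x)+(1-s)\log(1-e(x))\bigr]$, which per sample reduces to $s\delta_1^e(x)+(1-s)\delta_0^e(x)$ whenever $y=1$. The main obstacle is that the label $y$ is never observed in $\mathcal{D}$; only the partial label $s$ is. The approach is therefore to replace the unobservable indicator $\mathbb{I}(y=1)$ with its conditional expectation given the observables $(x, s)$, and then identify this expectation with the corrected label $h(x)$ already derived in Proposition 1.

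Concretely, I would take the expectation over $y$ given $(x, s)$:
$$\mathbb{E}_{y\mid x, s}[\mathbb{I}(y=1)] \;=\; p(y=1 \mid x, s) \;=\; h(x),$$
where the second equality is the closed-form expression established in the proof of Proposition 1. Substituting this back into $\mathcal{L}_{\text{ideal}}$, and factoring the outer expectation as $\mathbb{E}_{x, s}\mathbb{E}_{y\mid x, s}$, yields
$$\mathcal{L} \;=\; \frac{1}{n}\sum_{x} h(x)\bigl[s\,\delta_1^e(x) + (1-s)\,\delta_0^e(x)\bigr],$$
which is exactly the claimed loss. A brief sanity check is worthwhile: on samples with $s=1$ we have $h(x)=1$ and the term collapses to $\delta_1^e(x)$, so labeled positives act as hard positive supervision for $e$; on samples with $s=0$, the weight $h(x)$ down-weights their negative contribution by our current posterior belief that they are truly non-causal, which is precisely what is needed to correct the selection bias induced by the \ds{} labeling procedure.

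The main obstacle is conceptual rather than algebraic: $h(x)$ is itself defined in terms of both the current $f(x)$ and $e(x)$, so using it to supervise $e(x)$ risks a circular dependency. I would address this exactly as in Proposition 1, by invoking the stop-gradient $\text{sg}[\cdot]$ so that $h(x)$ is treated as a fixed soft label when updating $e$, turning the joint training of $f$ and $e$ into an EM-style alternation in which each iterate improves both estimates. A secondary subtlety worth flagging is that identification of $e(x)$ with the true label-assignment probability $p(s=1\mid x, y=1)$ inherits the conditional-independence assumption used in Proposition 1, together with the implicit premise that $s=1$ entails $y=1$ (so that the positive supervision is trustworthy); absent either, the surrogate above would estimate a different conditional than intended.
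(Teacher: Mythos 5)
Your proposal is correct and follows essentially the same route as the paper: the paper likewise starts from the ideal log-likelihood of $s$ involving the unobserved $y$, takes the expectation $\mathbb{E}_{y\mid x,s}$ to replace the $y$-dependence with the weights $h(x)$ and $1-h(x)$, and then discards the $y=0$ branch (which contributes nothing since $y=0$ forces $s=0$), arriving at the same weighted cross-entropy. Your explicit remarks on the stop-gradient and on the premise that $s=1$ entails $y=1$ match steps the paper uses implicitly.
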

\begin{proof}
Similar to the causal model $f$, the loss function for the propensity model $e$ is derived as follows:
\begin{equation}
\qedhere
\begin{aligned}
\mathcal{L} & = -\mathbb{E}_{x, y, s} \log p(s\mid x,y,e) \\ 
& = -\frac{1}{n}\sum_{x} \mathbb{E}_{y\mid x,s}  \log p(s\mid x,y,e) \\
& = -\frac{1}{n}\sum_{x} [ h(x)\log p(s\mid x,y=1,e) \\ 
& \quad \qquad \ \quad + (1-h(x))\log p(s\mid x,y=0,e))] \\
& = \frac{1}{n}\sum_{x} h(x) [s\delta_1^e(x)+(1-s)\delta_0^e(x)]
\end{aligned}
\end{equation}
\end{proof}
In practice, the loss used for training the propensity model $e$ is:
\begin{equation}
\label{eq:causal_loss_e}
\small
\mathcal{L}_e=\sum_{u\in\mathcal{U}}\sum_{i_t\in\mathcal{S}_t^u}h(i_t)[s\delta_1^e(i_t)+(1-s)\delta_0^e(i_t)].
\end{equation}

Intuitively, the corrected causal label $h(i_t)$  serves to down-weight non-causal samples during propensity estimation, as $e(x)$ represents the labeling probability for real causal samples.

Finally, due to the specially designed architectures and loss functions,  propensity model $e(x)$ and the causal model $f(x)$ capture complementary aspects from the causal dataset $\mathcal{D}$, $e(x)$ for labeling mechanism and $f(x)$ for real causality.

\subsection{\enhance}
In this section, we describe how \es~ integrates cross-domain representation learned from \ms~ in an adaptive manner to facilitate a more comprehensive modeling of target domain user preferences, i.e., $(\bm{L}, \bm{I})\rightarrow \bm{T}$.
\paragraph{Cross-Domain Self-Attention}
Let the vector representations of user $u$ learned by \ms~ from the source domain behaviors be denoted as $\bm{r}_s^u = f_\text{se}(u, \mathcal{S}_s^u)$, and those learned from the base recommendation model in the target domain as $\bm{r}_t^u$, such that $\bm{r}_s^u, \bm{r}_t^u \in \mathbb{R}^d$.
The cross-domain self-attention mechanism treats the representations $\bm{r}_s^u$ and $\bm{r}_t^u$ as a sequence of length two, represented as $\bm{R}^u = [\bm{r}_s^u, \bm{r}_t^u] \in \mathbb{R}^{2 \times d}$. 
The self-attention is defined as follows:
\begin{equation}
[\bm{p}_s^{u}, \bm{p}_t^u] = \bm{P}^u = \text{softmax}\left(\frac{(\bm{R}^u \bm{W}_q)(\bm{R}^u \bm{W}_k)^\top}{\sqrt{d}}\right)(\bm{R}^u \bm{W}_v),
\end{equation}
where $\bm{W}_q, \bm{W}_k, \bm{W}_v \in \mathbb{R}^{d \times d}$ are learnable weights corresponding to queries, keys, and values.

This self-attention mechanism enables the exchange of information between the source and target domains, resulting in more comprehensive user preference representations  $\bm{p}_s^{u}, \bm{p}_t^u \in \mathbb{R}^d$.
\paragraph{Cross-Domain Gated Attention}
We select specific user features $\bm{c}^u \in \mathbb{R}^{d_c}$, which implicitly describe the user's behavioral consistency and preference strength across domains, such as indicators of cold-start status or relative click activity.
Guided by these features, a gated attention mechanism is introduced to adaptively weight the contributions of $\bm{p}_s^u$ and $\bm{p}_t^u$. 
The gating function is defined as:
\begin{equation}
g^u = \sigma\left(\text{LeakyReLU}\left(\bm{c}^u \bm{W}_{g,1} + \bm{b}_{g,1}\right)\bm{W}_{g,2} + \bm{b}_{g,2}\right),
\end{equation}
where $\bm{W}_{g,1} \in \mathbb{R}^{d_c \times d}$, $\bm{W}_{g,2} \in \mathbb{R}^{d \times 1}$, and $\bm{b}_{g,1}\in \mathbb{R}^d, \bm{b}_{g,2} \in \mathbb{R}$ are learnable weights. 
LeakyReLU function is used for activation.

The gate scalar $g^u$ is then used to combine the user preferences $\bm{p}_s^u$ and $\bm{p}_t^u$ obtained from the cross-domain self-attention:
\begin{equation}
\bm{\nu}^u = g^u \odot \bm{p}_s^u + (1 -g^u)\bm{p}_t^u,
\end{equation}
The resulting user representation $\bm{\nu}^u$ is designed to integrates valuable information from both domains.

Finally, the preference scores for recommendations are computed by the dot product between the integrated user representation $\bm{\nu}^u$ and the representation $\bm{r}^{i_t}$ of the target item $i_t$ from the base recommendation model. 
The recommendation loss function is:
\begin{equation}
\small
\mathcal{L}_\text{rec} = -\sum_{u \in \mathcal{U}} \left[ \sum_{i_t \in \mathcal{S}_t^u} \log \left( \sigma\left({\bm{\nu}^u}^\top \bm{r}^{i_t} \right) \right) + \sum_{i_t \notin \mathcal{S}_t^u} \log \left( 1 - \sigma\left({\bm{\nu}^u}^\top \bm{r}^{i_t} \right) \right) \right].
\end{equation}
\subsection{Online Serving}\label{sec: Online Serving}
In this section, we provide the strategies implemented to reduce the undesirable increase of online latency without sacrificing the model performance in production.

\begin{itemize}[leftmargin=*]
    \item \textbf{Two-phase training.} We employ a dual-stage training scheme, consisting of daily full-model training on data collected within a single day, followed by intra-day incremental updates initialized from the previous day's checkpoint.
    \item \textbf{Freezing dense parameters.} During intra-day incremental updates, we freeze the dense parameters within the \ms{} and only update the sparse tables, ensuring that the distribution of the learned causal representation $\bm{r}_s^u$ remains stable.
    \item \textbf{Real-time embedding cache.} We maintain a real-time cache for causal representations $\bm{r}_s^u$ of each user, during incremental updates. These cached embeddings are directly served for online inference to reduce latency.
\end{itemize}
These strategies have been deployed in production and proven effective in meeting both latency and performance expectations.


\section{Experiments}
To evaluate our method, we conduct experiments to answer the following research questions (RQs):
\begin{itemize}[leftmargin=*]
    \item \textbf{RQ1:} How does \fr{} perform compared to baselines?
    \item \textbf{RQ2:} Can \ms{} generalize beyond observed data and identify unseen causal patterns?
    \item \textbf{RQ3:} How does the learned causal representation contribute to recommendation?
    \item \textbf{RQ4:} How does each module in CE-CDR impact performance?
    \item \textbf{RQ5:} How does similarity threshold $\tau$ affect performance?
    \item \textbf{RQ6:} How does CE-CDR perform in real-world production?
\end{itemize}

\subsection{Experimental Settings}
\subsubsection{Datasets.}
\begin{table}[t]
    \centering
    \caption{Statistics of datasets.}
    \renewcommand{\arraystretch}{0.8}
    \label{tab:dataset_statistics}
    \scalebox{0.68}{
    \begin{tabular}{l|c|c|c|c|c|c}
        \toprule
        \textbf{Datasets} & \multicolumn{2}{c|}{\textbf{Douban}} & \multicolumn{2}{c|}{\textbf{Amazon}} & \multicolumn{2}{c}{\textbf{Industry}} \\
        \midrule
        \textbf{Scenarios} & \multicolumn{2}{c|}{Book-Music} & \multicolumn{2}{c|}{Movies-CDs} & \multicolumn{2}{c}{Note-Video} \\
        \midrule
        \textbf{Domains} & Book & Music & Movies & CDs & Note & Video \\
        \midrule
        \textbf{\#Users} & 2,212 & 1,820 & 123,960 & 112,395 & 188,668,764 & 186,822,365  \\
        \midrule
        \textbf{\#Shared users} & \multicolumn{2}{c|}{1,736} & \multicolumn{2}{c|}{18,547} & \multicolumn{2}{c}{167,766,638} \\
        \midrule
        \textbf{\#Items} & 95,872 & 79,878 & 50,052 & 73,713 & 65,732,897 & 35,344,556 \\
        \midrule
        \textbf{\#Interactions} & 227,251 & 179,847 & 1,697,533 & 1,443,755 & 20,479,556,858 & 9,467,087,108 \\
        \bottomrule
    \end{tabular}
    \label{tab:statistic}
    }
\end{table}
\begin{table*}[t]
    \caption{Overall recommendation performance. The best is highlighted in bold, while the second-best is underlined.}
    \centering
    \renewcommand{\arraystretch}{0.95}
    \scalebox{0.8}{
    \begin{tabular}{l|cc|cc|cc|cc|cc|cc}
        \toprule
        \multirow{2}{*}{\textbf{Domain}} & \multicolumn{4}{c|}{\textbf{Douban}} & \multicolumn{4}{c|}{\textbf{Amazon}} & \multicolumn{4}{c}{\textbf{Industry}} \\
        \cline{2-13}
        & \multicolumn{2}{c|}{Book} & \multicolumn{2}{c|}{Music} & \multicolumn{2}{c|}{Movies and TV} & \multicolumn{2}{c|}{CDs and Vinyl} & \multicolumn{2}{c|}{Note} & \multicolumn{2}{c}{Video}\\
        \hline
        
        \textbf{Metric} & \text{HR@15} & \text{NDCG@15} & 
        \text{HR@15} & \text{NDCG@15}& \text{HR@15} & \text{NDCG@15}& \text{HR@15} & \text{NDCG@15}& \text{HR@150} & \text{NDCG@150}& \text{HR@150} & \text{NDCG@150} \\
        \hline
        \textbf{Base Model} & 0.3052 & 0.2303 & 0.3196 & 0.2248 & 0.4959 & 0.3084 & 0.4514 & 0.2796 & 0.1848 & 0.0633 & 0.2238 & 0.0659 \\
        \textbf{CoNet} & 0.3961 & 0.3082 & 0.4142 & 0.3018 & 0.5437 & 0.3459 & 0.4909 & 0.3201 & 0.2265 & 0.0811 & 0.2807 & 0.0712 \\
        \textbf{MAN} & 0.4359 & 0.3433 & 0.4336 & 0.3153 & 0.5879 & 0.3822 & 0.5236 & 0.3652 & 0.2348 & 0.0821 & 0.3045 & 0.0804 \\
        \textbf{DiCUR} & 0.4475 & 0.3430 & 0.4468 & 0.3261 & 0.6337 & 0.4147 & \uline{0.5337} & 0.3749 & 0.2446 & 0.0843 & 0.3310 & 0.0851 \\
        \textbf{MiNet} & 0.4166 & 0.3209 & 0.4367 & 0.3208 & 0.5692 & 0.3666 & 0.5123 & 0.3469 & 0.2373 & \uline{0.0866} & 0.3025 & 0.0846 \\

        \textbf{TrineCDR} & \uline{0.4586} & \uline{0.3436} & \uline{0.4554} & \uline{0.3314} & \uline{0.6508} & \uline{0.4270} & 0.5284 & \uline{0.3805} & \uline{0.2484} & 0.0856 & \uline{0.3466} & \uline{0.0890} \\
        \midrule
        \textbf{\fr{}} & \textbf{0.5023} & \textbf{0.3711} & \textbf{0.4983} & \textbf{0.3600} & \textbf{0.6712} & \textbf{0.4498} & \textbf{0.5592} & \textbf{0.3914} & \textbf{0.2733} & \textbf{0.0933} & \textbf{0.3645} & \textbf{0.0992} \\
        \midrule
        
        \textbf{CoNet+CE} & 0.4907 & 0.3636 & 0.4882 & 0.3527 & 0.6552 & 0.4364 & 0.5435 & 0.3843 & 0.2660 & 0.0913 & 0.3539 & 0.0975 \\
        \textbf{MAN+CE} & 0.5094 & 0.3763 & 0.4990 & 0.3605 & 0.6689 & 0.4483 & 0.5587 & 0.3910 & 0.2780 & 0.0962 & 0.3671 & 0.0999 \\
        \textbf{DiCUR+CE} & \textbf{{0.5136}} & \textbf{{0.3784}} & {0.5026} & {0.3639} & \textbf{{0.6810}} & \textbf{{0.4592}} & \textbf{{0.5672}} & \textbf{{0.4035}} & \textbf{{0.2837}} & \textbf{{0.0987}} & \textbf{{0.3780}} & \textbf{{0.1025}} \\

        \textbf{MiNet+CE} & 0.5054 & 0.3734 & 0.4965 & 0.3587 & 0.6732 & 0.4512 & 0.5600 & 0.3920 & 0.2715 & 0.0927 & 0.3669 & 0.0999 \\

        \textbf{TrineCDR+CE} & 0.5025 & 0.3712 & \textbf{0.5031} & \textbf{0.3648} & 0.6786 & 0.4548 & 0.5657 & 0.3960 & 0.2744 & 0.0937 & 0.3627 & 0.0987 \\

        \bottomrule
    \end{tabular}
    \label{tab: overall}
    }
\end{table*}

We conduct experiments on three real-world datasets, with their statistics presented in Table \ref{tab:statistic}. 
\begin{itemize}[leftmargin=*]
\item \textbf{Douban}~\cite{zhu2020graphical}: A cross-domain recommendation dataset collected from Douban system. We use the Book and Music domains of the dataset for offline evaluation.
\item \textbf{Amazon}~\cite{ni2019justifying}: This dataset is derived from the Amazon Review Dataset. Specifically, since most subsets do not share users, we use the "Movies and TV" and "CDs and Vinyl" subsets, which share a portion of common users.
\item \textbf{Industry}: A large-scale dataset collected from a real-world content-sharing platform, Rednote (Xiaohongshu), for offline evaluation, containing user-view interactions with notes and videos.
\end{itemize}

\subsubsection{Baselines.}
We evaluated our proposed \fr{}, against several baseline approaches: CoNet, MAN, and DICUR enhance cross-domain recommendation by optimizing source domain tasks. 
In contrast, MINET and TrineCDR emphasize the utilization of source domain features.
Descriptions of baselines are provided as follows:
\begin{itemize}[leftmargin=*]
\item \textbf{Base Model} is a basic dual tower model, where both towers are implemented as a multi-layer perceptron with ReLU activations.
\item \textbf{CoNet}~\cite{hu2018conet} introduces dual cross-connections to enable bidirectional knowledge transfer between latent representations.
\item \textbf{MAN}~\cite{lin2024mixed} leverages mixed attention mechanisms and shared group embeddings for group transfer learning.
\item \textbf{DiCUR}~\cite{zhao2024discerning} learns domain-shared and domain-specific user representations through canonical correlation analysis~\cite{van2011generalized}.
\item \textbf{MiNet}~\cite{lin2024mixed} jointly models source domain, target domain, and long-term interests via hierarchical attention mechanisms.
\item \textbf{TrineCDR}~\cite{song2024mitigating} enhances knowledge transferability at the feature level, interaction level, and domain level via graph propagation and contrastive learning.
\end{itemize}

\subsubsection{Implementation Settings.}
For a fair comparison, all methods are trained via distributed computing and optimized using the Adam optimizer with default settings. 
The embedding size $d$ is set to $32$ for Douban and Amazon datasets, and $64$ for Industry dataset. 
Different batch sizes are employed for each domain, to ensure an equal batch number per epoch. 
As for the hyper-parameter of \fr{}, we tuned $\tau$ to $0.9$ for Douban and Amazon datasets, and $0.85$ for Industrial dataset.
Detailed hyper-parameter analyses are provided in section \ref{sec: hyper-parameter experiment}. 
For baseline models, we use the hyper-parameters in the original implementation whenever available; otherwise, we tune them for optimal performance.

\subsubsection{Evaluation Protocol} \label{sec: metrics}
Following the previous studies\cite{cao2022disencdr, krichene2020sampled, liu2020cross}, we adopt the widely used leave-one-out method to evaluate the recommendation performance, which reserves one latest interaction as the test item for each user. 
Hit Ratio (HR) and Normalized Discounted Cumulative Gain (NDCG) are employed as the evaluation metrics, as in existing studies\cite{zhao2023cross, xu2021expanding, zhao2022multi}. 
Specifically, HR@K measures whether the relevant item is in the top-K recommended items, while NDCG@K further considers the rank of the hit. 
We set $K$ to 15 for Douban and Amazon datasets, while $150$ for Industrial dataset, as the full-scale industrial dataset has much larger candidate items, approximately $30$ million for each recommendation.
For each dataset, one domain is designated as the source domain, while the other serves as the target domain alternatively. 
The recommendation performance on the target domain is reported.

\subsection{Overall Performance (RQ1)}
We evaluated \fr{} against several baseline cross-domain methods. Additionally, we enhance these methods by incorporating the causal representation learned by CE, which serves as a model-agnostic plugin. A summary of the overall performance is presented in Table \ref{tab: overall}. 
Several key findings are observed:
First, \fr{} consistently outperforms all baseline methods across all datasets and evaluation metrics, highlighting its general effectiveness. 
Among baseline methods, those based on source domain features outperform multi-task learning approaches.
The latter may suffer from relatively insufficient utilization of cross-domain knowledge~\cite{zhao2024discerning, zhao2023cross} and negative transfer~\cite{song2024mitigating, huang2024exit}.
Second, incorporating CE as a plugin leads to the best performance of DiCUR+CE. 
Interestingly, combining CE with multi-task learning approaches results in more substantial improvements than when paired with feature based methods. 
This may be because CE aligns more closely with feature-based methods, while its divergence from multi-task learning could contribute to greater improvements.
Moreover, CoNet+CE falls short of \fr{} alone, underscoring the issue of negative transfer. 
Finally, cross-domain methods achieve greater improvement on Douban and industrial datasets compared to the Amazon dataset, 
which may result from the limited overlap of shared users in Amazon.

\subsection{Further Analysis (RQ2, RQ3)}
\subsubsection{Distribution of Learned Causality (RQ2)}
\begin{figure}[t]
    \centering
    \begin{subfigure}{0.225\textwidth}
        \centering
    \includegraphics[width=\textwidth]{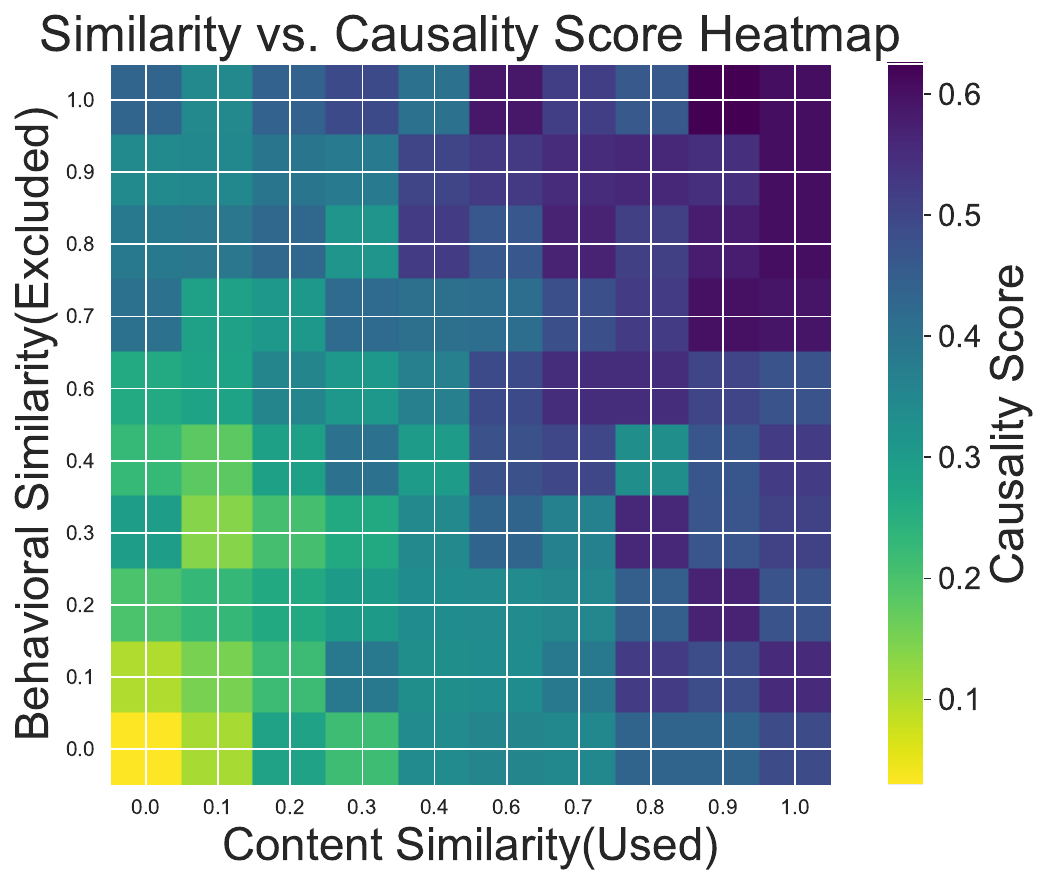}        \caption{\normalfont \scriptsize Excluding behavioral similarity from CLM.} 
    \end{subfigure}
    \begin{subfigure}{0.225\textwidth}
        \centering
\includegraphics[width=\linewidth]{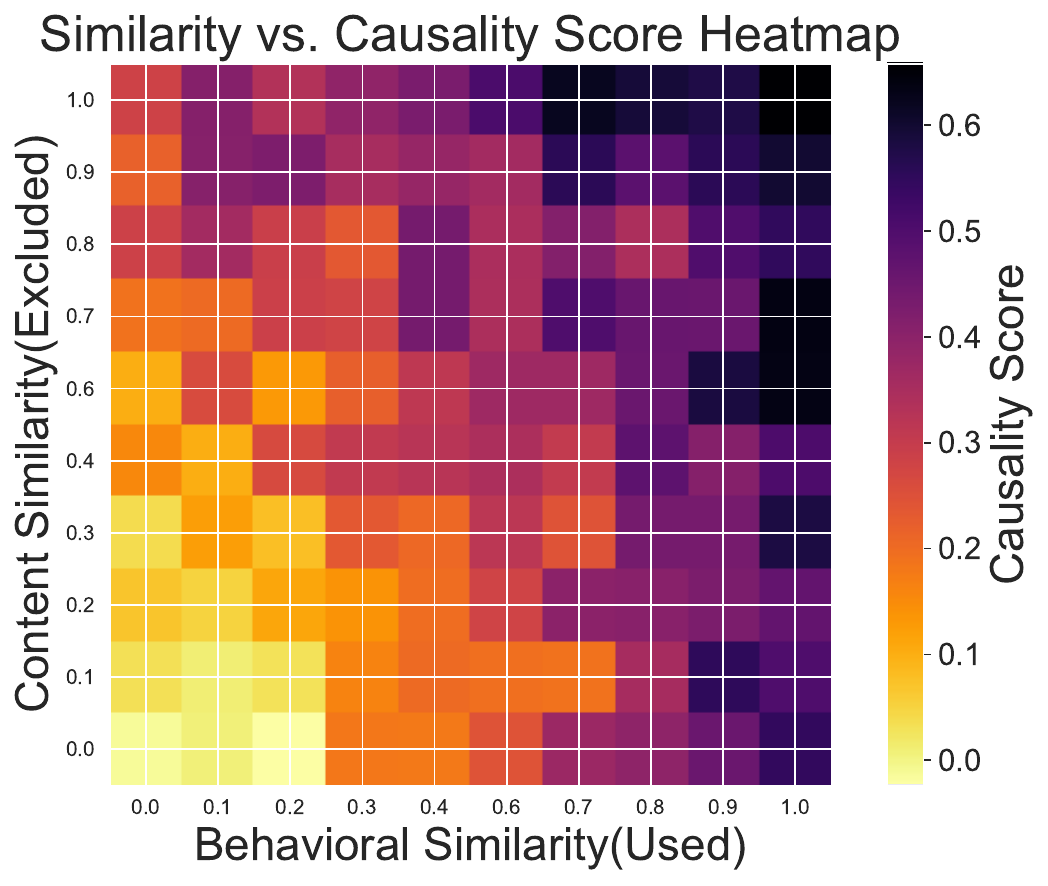}
        \caption{\normalfont \scriptsize Excluding content similarity from CLM..}
    \end{subfigure}
    \caption{Analyzing how the causality score varies with used and excluded similarities during dataset construction.} 
    \label{fig: Distribution of Learned Casuality.}
\end{figure}
In this section, we explore whether \ms{} captures unseen causal patterns beyond the causality-aware dataset.
To investigate this, we conduct an experiment on industry dataset, with \textit{video} as the target domain. 
In this setup, since real causal samples $\mathcal{C}$  are unavailable, we construct the causality-aware dataset by excluding one type of similarity, upon which \ms{} is trained. 
The average causality score $f(u, \mathcal{S}_s^u,i_t)$, for samples with different similarities $\text{sim}^{\{c,b\}}(\mathcal{S}_s^u, i_t)$, is presented as a heatmap in Figure~\ref{fig: Distribution of Learned Casuality.}.

As shown, in both experimental settings, the predicted causality scores align well with the similarity used in \ds{} as expected. 
Notably, the result also demonstrates an evident positive correlation for the excluded similarity measure. 
Although this correlation is somewhat weaker than the similarity type used for dataset construction, it indicates that \ms{} can identify unseen causality patterns. 
Moreover, several instances exhibit high causality scores, despite low similarity scores across both measures, highlighting the model's capacity to generalize and uncover underlying causal relationships beyond the causality-aware dataset.

\subsubsection{Effectiveness of Learned Causal Representation (RQ3)}
\begin{figure}[t]
    \centering
    \includegraphics[width=0.45\textwidth]{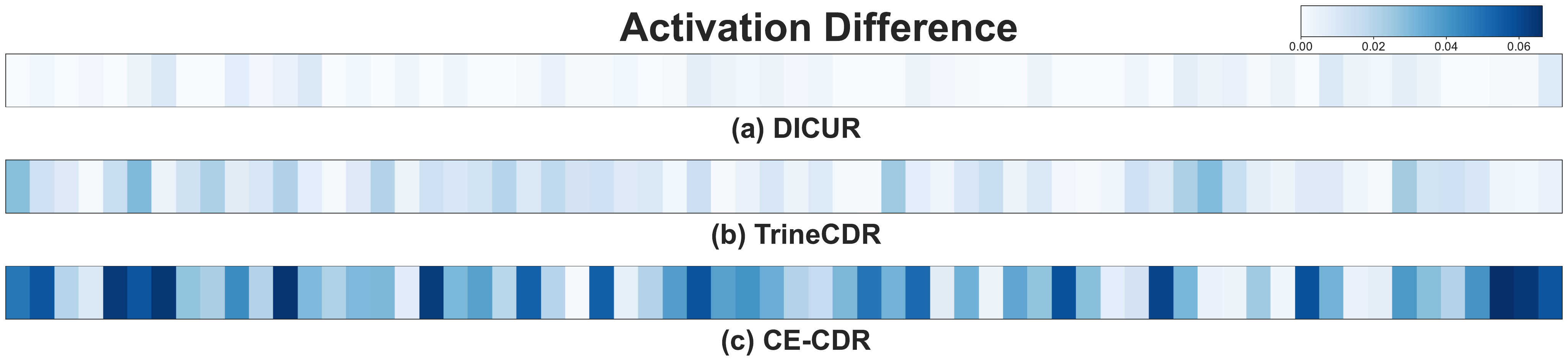}      
    \caption{Activation Difference of Cross-Domain Features.} 
    \label{fig: Submerge weight diff.}
\end{figure}
\paragraph{Activation Difference under Feature Masking}
As discussed in Section \ref{sec: introduction}, cross-domain features are often underutilized due to the sparsity of causal relationships. 
To examine the actual contribution of source domain features, we mask the source-domain behavior features and visualize the activation difference of the user representation (averaged across multiple batches) on the industry dataset, with \textit{video} as the target domain. 
Figure~\ref{fig: Submerge weight diff.} illustrates the differences before and after masking for three methods, including two representative baseline approaches when incorporating source domain features. 
Compared to both DICUR and TrineCDR, our proposed \fr{} shows a considerable activation difference, indicating that the cross-domain feature captured by \fr{} has a more significant contribution to final recommendation.

\paragraph{Dynamic Gating Interpretation}
\begin{figure}[t]
    \centering
    \includegraphics[width=0.45\textwidth]{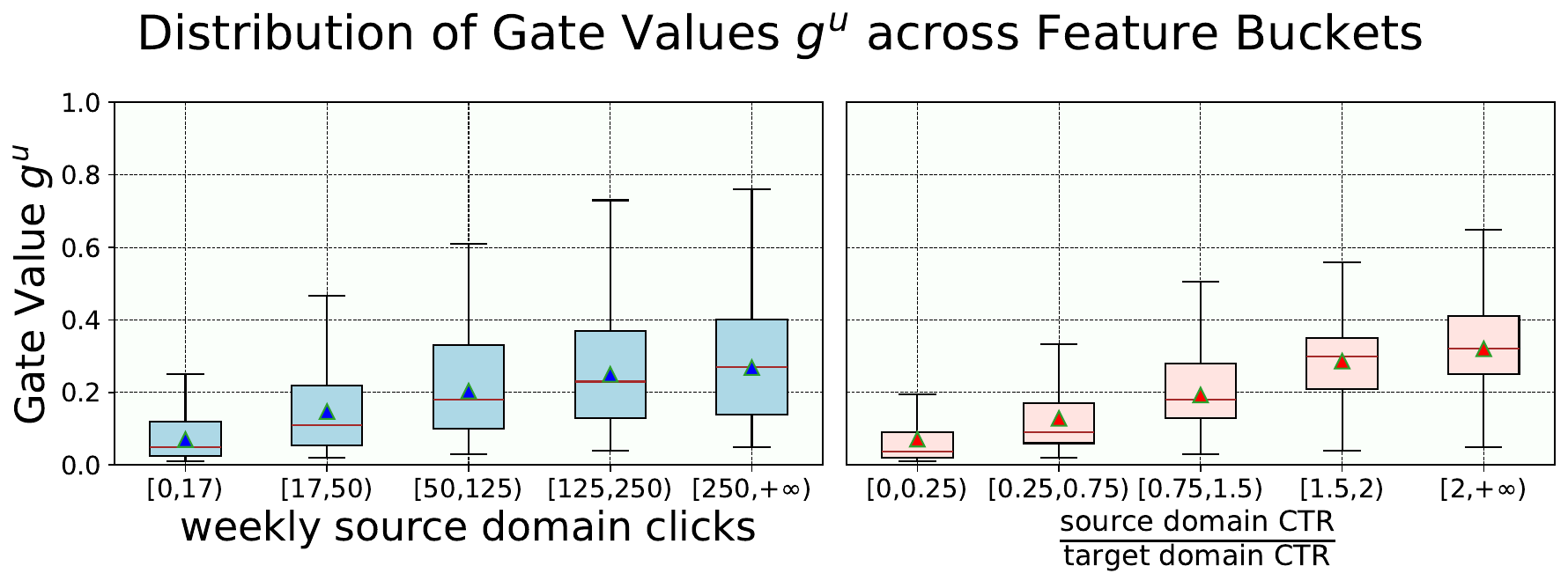}       \caption{Distributions of gate value $g^u$ for cross-domain representation fusion.} 
    \label{fig: Dynamic Gating Interpretation.}
\end{figure}
\es{} is designed based on the intuition that the impact of source-domain causal representations on target-domain recommendations should vary across users.
To assess whether the roles of $\bm{p}_s^{u}$ and $\bm{p}_t^{u}$ in cross-domain representation fusion align with our expectations, we analyze the gate value distribution $g^u$ in \es{} using box plots on the industrial dataset, with \textit{video} as target domain. 
The analyses, shown in Figure~\ref{fig: Dynamic Gating Interpretation.}, are conditioned on different buckets of two features in $\bm{c}^u$. 
Results reveal that users with richer source domain information tend to have higher gate values $g^u$, confirming the effectiveness of our dynamic gating mechanism in adaptively balancing between leveraging rich source-domain knowledge and maintaining target-domain relevance.

\subsection{Ablation Study (RQ4)}\label{sec: Ablation Study}
\begin{table}[t]
\caption{Ablation Study.}
\centering
\renewcommand{\arraystretch}{0.85}
\begin{tabular}{l|cc|cc}
\noalign{\hrule height 1.0pt}
\multirow{2}{*}{Domain} & 
\multicolumn{4}{c}{Douban} \\
\cline{2-5}
& \multicolumn{2}{c|}{Book} & \multicolumn{2}{c}{Music} \\
\hline
Metric@15 & HR  & NDCG & HR  & NDCG \\
\hline
-w/o-\ds  & 0.4357 & 0.3249 & 0.4412 & 0.3285 \\
-w/o-Cas  & 0.4412 & 0.3321 & 0.4374 & 0.3301 \\
-w/o-Att  & 0.4849 & 0.3654 & 0.4763 & 0.3479\\
-w/o-Gate & 0.4702 & 0.3561 & 0.4703 & 0.3452 \\
-w/-Cache & 0.4991 & 0.3692 & 0.4961 & 0.3589 \\
\hline
\fr{} & \textbf{0.5023} & \textbf{0.3711} & \textbf{0.4983} & \textbf{0.3600} \\
\hline
\multirow{2}{*}{Domain} &  \multicolumn{4}{c}{Industry} \\
\cline{2-5}
& \multicolumn{2}{c|}{Note} & \multicolumn{2}{c}{Video} \\
\hline
Metric@150 & HR  & NDCG & HR  & NDCG \\
\hline
-w/o-\ds  & 0.2354 & 0.0831 & 0.3193 & 0.0857 \\
-w/o-Cas  & 0.2336 & 0.0822 & 0.3212 & 0.0849 \\
-w/o-Att  & 0.2631 & 0.0902 & 0.3518 & 0.0958\\
-w/o-Gate & 0.2536 & 0.0866 & 0.3382 & 0.0920 \\
-w/-Cache & 0.2722 & 0.0923 & 0.3634 & 0.0983 \\
\hline
\fr{} & \textbf{0.2733} & \textbf{0.0933} & \textbf{0.3645} & \textbf{0.0992} \\
\hline
\end{tabular}
\label{tab: ablation}
\end{table}

We conduct an ablation study on several variants of \fr{} across both industrial dataset and the public Douban dataset. 
\textbf{-w/o-\ds} excludes \ds{}, treating all target domian interactions as causal positives. 
\textbf{-w/o-Cas} removes the partial label causal loss in \ms~and  directly treats $\mathcal{D}^+$ as the true causal set. \textbf{-w/o-Att} eliminates the cross-domain self-attention in \es{}. 
\textbf{-w/o-Gate} replaces the cross-domain gated attention in \es{} with simple concatenation.
\textbf{-w/-Cache} enables cache strategies mentioned in section \ref{sec: Online Serving}.

Table~\ref{tab: ablation} shows that the complete \fr{} consistently outperforms its variants, highlighting the importance of each component: 
(1) The labeling mechanism in \ds{} generates high-quality causal positives for downstream causal modeling.
(2) \ms{} captures generalized cross-domain causal patterns from limited positive causal samples guided by partial label causal loss. 
(3) Cross-domain self-attention in \es{} enables information exchange between domains, improving representation quality. 
(4) Cross-domain gated attention in \es{} personalizes the fusion of cross-domain representations and precisely regulates their contribution to the final recommendation.
(5)  Cache strategies for online serving eliminate latency increase with negligible performance trade-off.



\subsection{Hyperparameter Experiment (RQ5)} \label{sec: hyper-parameter experiment}
\begin{figure}[t]
    \centering
    \includegraphics[width=0.225\textwidth]{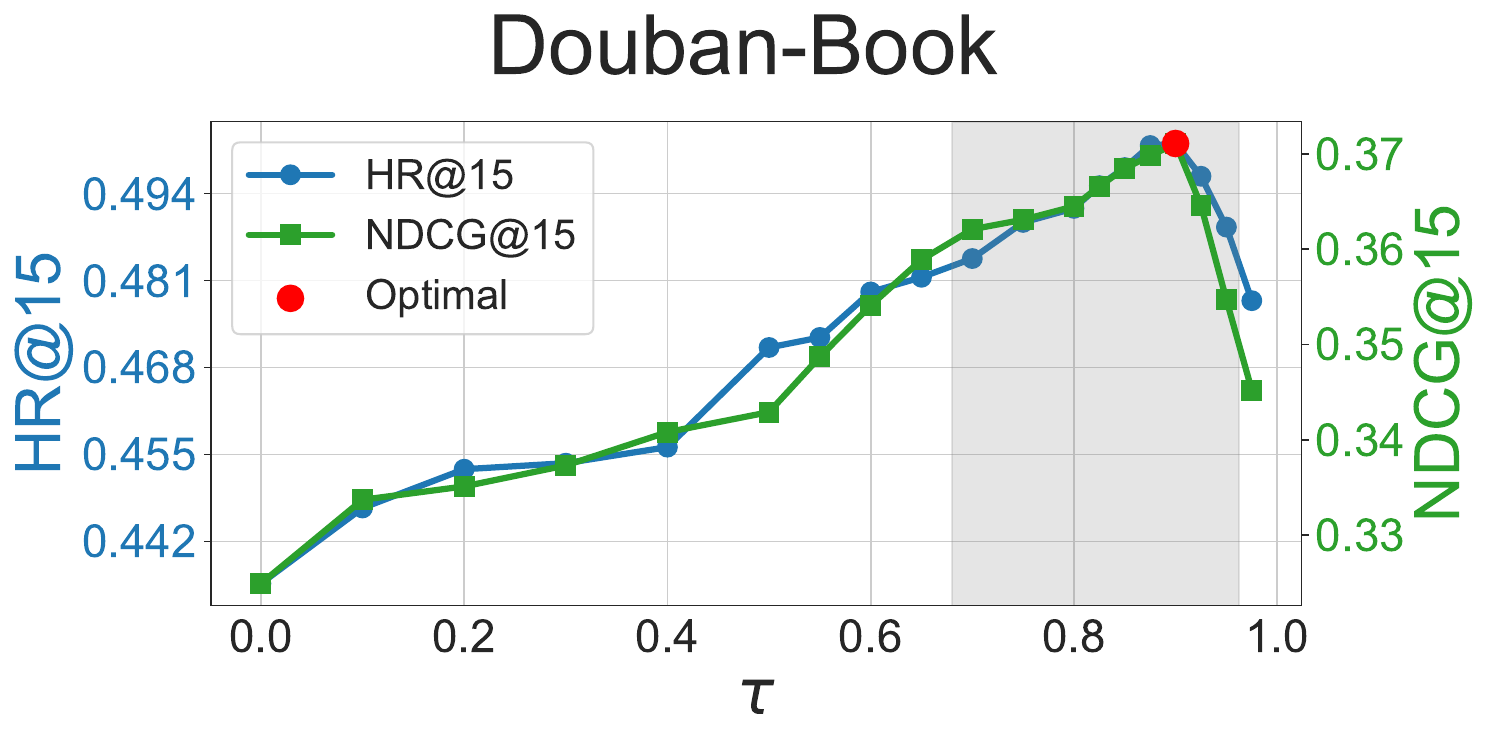}  \includegraphics[width=0.225\textwidth]{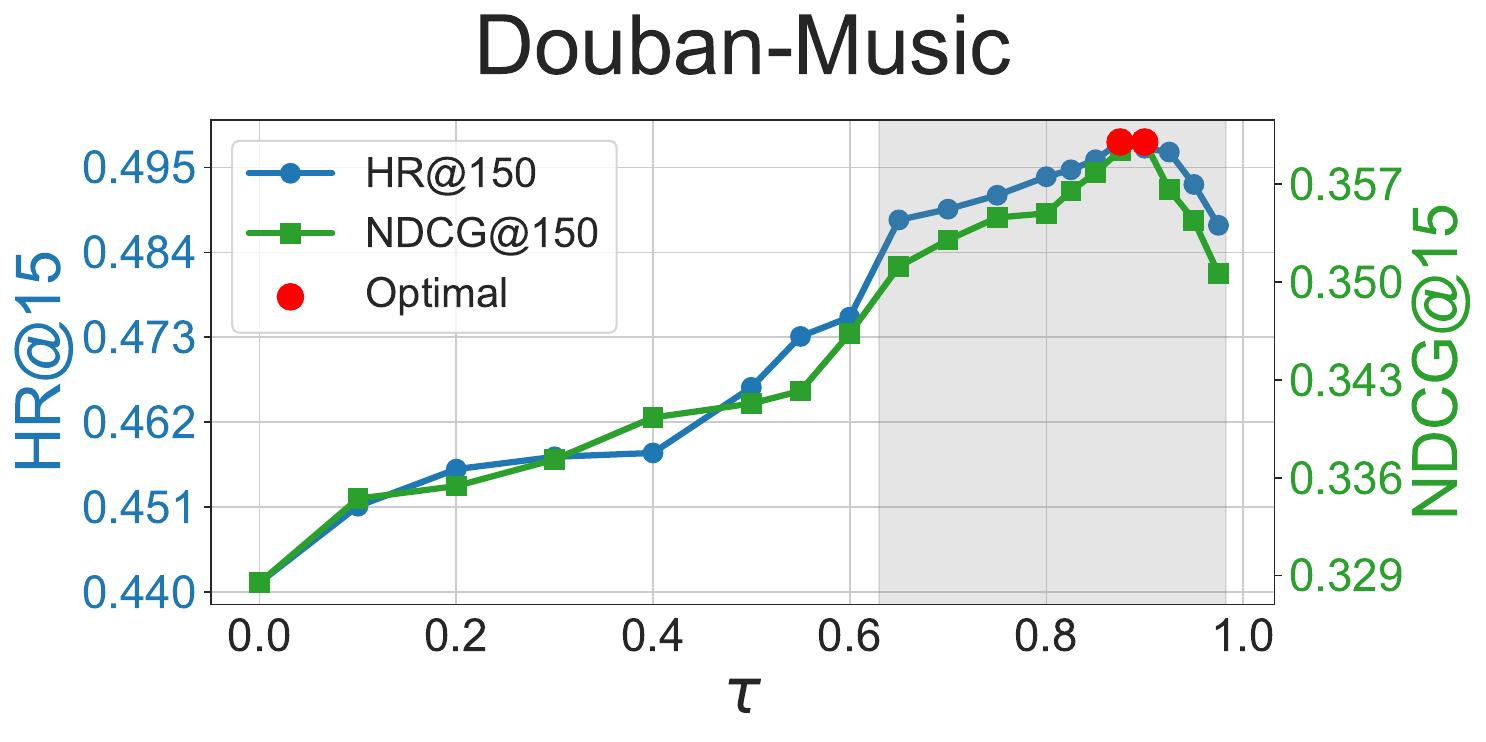} 
    \includegraphics[width=0.225\textwidth]{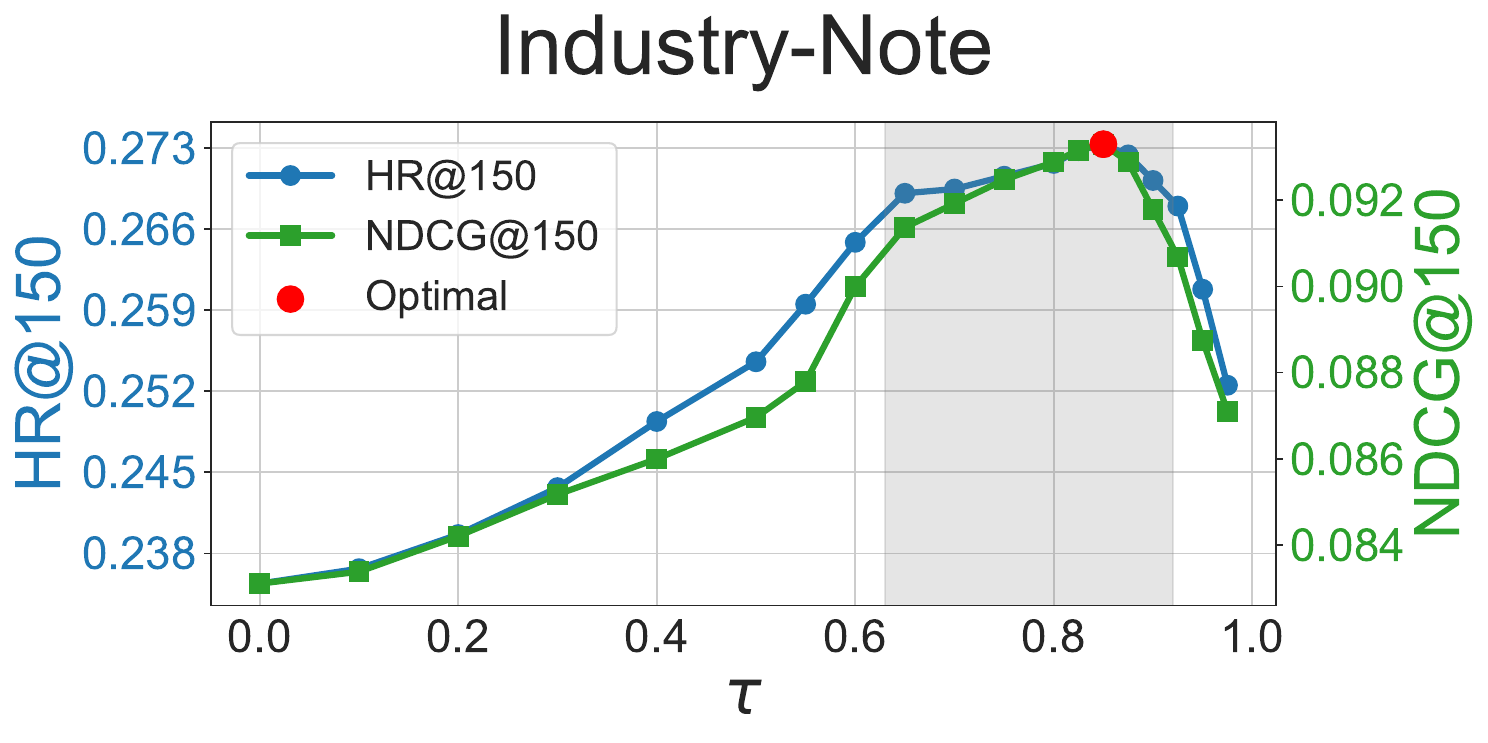}  \includegraphics[width=0.225\textwidth]{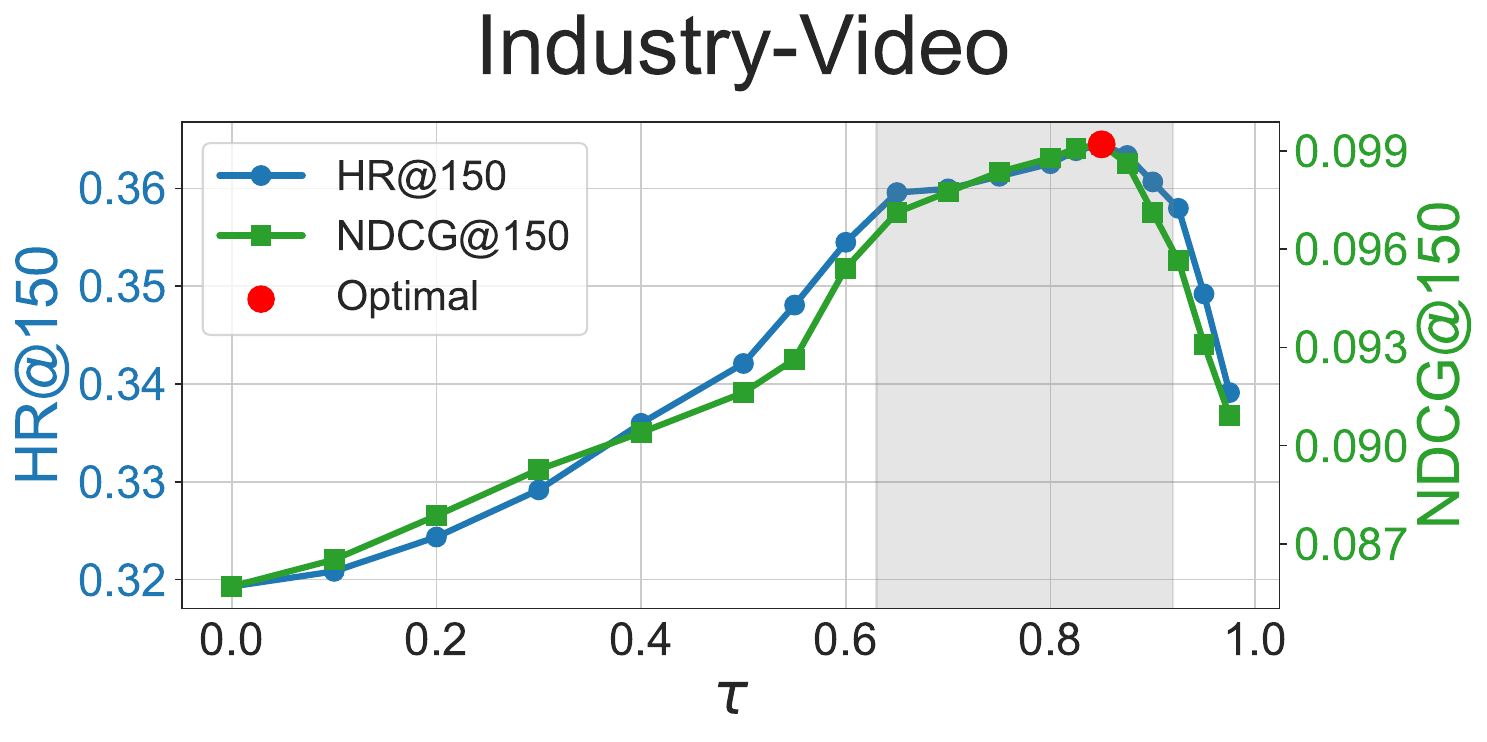} 
    \caption{Hyperparameter experiment on the similarity threshold parameter $\tau$ used in \ds{}.} 
    \label{fig: Sensity Experiment.}
\end{figure}
We investigate the sensitivity of \fr{} to the threshold parameter $\tau$ used in \ds{}.
Notably, when $\tau=0$, the model corresponds to the ablation variant CE-CDR-w/o-\ds, as discussed in Section \ref{sec: Ablation Study}.
Results in Figure~\ref{fig: Sensity Experiment.} demonstrate that selecting an optimal value of $\tau$ enhances model performance, underscoring the importance of balancing the quality and quantity of causal signals. 
A smaller $\tau$ may introduce excessive noise into the causality-aware dataset, while a larger $\tau$ overly restricts the inclusion of positive samples, limiting the amount of useful causal supervision signals.
Furthermore, the model exhibits robustness to $\tau$ values in the range of $0.65$ to $0.9$, with minor performance fluctuations, indicating that \fr{} can withstand moderate hyper-parameter variations.

\subsection{Online Experiments (RQ6)}
\subsubsection{Online Performance}
As shown in Table~\ref{Table: onlinePerformance}, a two-week A/B test conducted on a content-sharing platform, Rednote (Xiaohongshu), which serves hundreds of millions of daily active users, shows statistically significant improvements across
multiple recommendation scenarios and metrics at 95\% confidence.
The proposed \fr~ has been deployed in production since April 2025.

\subsubsection{Computational Cost}
The proposed \fr~ functions as a plugin to the online baseline model in our production environment. During training, the lightweight DCMM requires only an additional \texttt{64} CPU cores and \texttt{192} GB of memory. Such overhead is far from a bottleneck in industrial systems.
For online serving, benefiting from the real-time embedding cache as discussed in Section~\ref{sec: Online Serving}, the inference resource after incorporating \fr~ is identical to that of baseline, with both online latencies around 
$15.8$ms.

\begin{table}[t] 
\small
\centering
\caption{Online A/B test on a content-sharing platform, Rednote.}

\scalebox{0.93}{\begin{tabular}{p{1.1cm}|p{1cm}|p{0.7cm}|p{0.7cm}|p{1.2cm}|p{2.cm}}
\noalign{\hrule height 1.0pt}
\textbf{Scenario} & \textbf{Duration} & \textbf{Click} & \textbf{CTR} & \textbf{Diversity} & \textbf{Next-day Active} \\
\hline
\textbf{Video} & +0.33\% & +0.37\% & +0.12\% & +0.18\%&  +0.06\% \\
\hline
\textbf{Note} &	+0.28\%	& +0.43\% &	+0.16\%	& +0.21\% & +0.07\% \\

\hline
\end{tabular}}
\label{Table: onlinePerformance}
\end{table}

\section{Conclusion}
In this paper, we propose Causality Enhancement for Cross-Domain Recommendation (CE-CDR), a novel approach that explicitly models the cross-domain causality from both data and model perspectives. 
Leveraging principled causality labeling, and theoretically unbiased causal modeling guided by Partial Label Causal Loss, the recommendation performance is enhanced by
integrating the causal representation in an adaptive manner.
Extensive experiments demonstrate its superiority and its general applicability as a model-agnostic causal enhancement plugin. 
The well-designed architecture of its modules is also validated.
Meanwhile, \fr{} has been deployed on a real-world platform, Rednote, serving hundreds of millions of users daily. 
While our current assumption works well in practice, we aim to revisit and refine the core assumptions behind CLM for more effective causal modeling in the future.

\clearpage
\bibliographystyle{ACM-Reference-Format}
\bibliography{WWW-reference}
\appendix

\end{document}